\documentclass[conference]{IEEEtran}
\IEEEoverridecommandlockouts

\usepackage{cite}
\usepackage{amsmath,amssymb,amsfonts}
\usepackage{algorithmic}
\usepackage{graphicx}
\usepackage{textcomp}
\usepackage{xcolor}
\def\BibTeX{{\rm B\kern-.05em{\sc i\kern-.025em b}\kern-.08em
    T\kern-.1667em\lower.7ex\hbox{E}\kern-.125emX}}
\usepackage{amsmath,amssymb,amsthm,mathtools}
\usepackage{bm}
\usepackage{hyperref}
\usepackage[nameinlink]{cleveref}
\usepackage{array}
\newtheorem*{example*}{Example}
\usepackage{amsthm,thmtools}
\usepackage{tabularx,makecell}
\usepackage{amsmath, amsthm, amscd, amsfonts, amssymb, graphicx, color, booktabs, multirow, rotating, tikz}
\usepackage{booktabs}   
\usepackage{lscape} 
\usepackage{multirow}   
\usepackage{tabularx}
\usepackage{float}
\usepackage{array} 
\newcolumntype{C}[1]{>{\centering\arraybackslash}p{#1}}
\usepackage{stfloats}
\declaretheorem[name=Example,numbered=no]{examplecont}
\newtheorem{theorem}{Theorem}

\newtheorem{definition}{Definition}
\newtheorem{remark}{Remark}
\newtheorem{example}{Example}
\newtheorem{proposition}{Proposition}

\usepackage{array,tabularx} 
\newcolumntype{Y}{>{\raggedright\arraybackslash}X}
\newcolumntype{C}{>{\centering\arraybackslash}X}
\newcolumntype{L}{>{\raggedright\arraybackslash}X}

\usepackage{geometry}
\begin{document}

\title{Constructing Quantum Convolutional Codes\\ via Difference Triangle Sets
\thanks{This material is based upon work supported by the National Science Foundation under Grant No. CCF-2145917.}}

\author{\IEEEauthorblockN{Vahid Nourozi and David G. M. Mitchell}
\IEEEauthorblockA{Klipsch School of Electrical and Computer Engineering,\\New Mexico State University,\\ Las Cruces, NM 88003, USA\\\{nourozi,dgmm\}@nmsu.edu}\vspace{-8mm}

}

\maketitle

\begin{abstract}

In this paper, we introduce a construction of quantum convolutional codes (QCCs) based on difference triangle sets (DTSs). To construct QCCs, one must determine polynomial stabilizers $X(D)$ and $Z(D)$ that commute (symplectic orthogonality), while keeping the stabilizers sparse and encoding memory small. To construct $Z(D)$, we show that one can use a reflection of the DTS indices of \(X(D)\), where $X(D)$ corresponds to a classical convolutional self-orthogonal code (CSOC) constructed from strong DTS supports. The motivation of this approach is to provide a constructive design that guarantees a prescribed minimum distance. We provide numerical results demonstrating the construction for a variety of code rates.
\end{abstract}

\begin{IEEEkeywords}
Quantum convolutional codes (QCCs), stabilizer codes, difference–triangle sets (DTS), 
classical self-orthogonal convolutional codes (CSOCs).
\end{IEEEkeywords}

\section{Introduction}

\IEEEPARstart{Q}{uantum} error correction is essential for reliable quantum information processing in the presence of noise and decoherence. The stabilizer formalism supplies a unifying algebraic framework for designing and analyzing quantum codes by importing linear-algebraic structure over Pauli operators
~\cite{Gottesman1997,LidarBrun2013}. Quantum convolutional codes (QCCs) extend block stabilizer codes to \emph{streams} of qubits with shift-invariant, memory-bearing encoders that enable online encoding/decoding and natural terminations (e.g., tail-biting). Foundational developments include a stabilizer description for streaming protection \cite{OllivierTillich2003, vahid}, an algebraic treatment of convolutional and tail-biting quantum codes \cite{ForneyGrasslGuha2007}, and an entanglement-assisted framework~\cite{WildeBrun2010}. These works highlight benefits of QCCs, such as structured encoders, reduced decoding complexity relative to comparable block codes, and good distance properties. How to construct QCCs with large minimum distance is a general open problem. 

Difference triangle sets (DTS) are classical combinatorial objects that control pairwise differences within indexed supports and have been used to engineer sparse, well-separated tap patterns; while constructive and computational designs with small scopes are known~\cite{CheeColbournDTS}. Classically, DTS patterns have a rich history in coding theory, notably for constructing high-rate \emph{convolutional self-orthogonal codes} (CSOCs) \cite{mass, rob}. These classical constructions require pairwise orthogonal parity-check sequences, and the distinct-difference property of DTS yields sparse, well-separated supports with minimal overlap—features that promote self-orthogonality and practical decoders~\cite{wu2003,alfarano2021,mass,dm}. CSOCs  enjoy practical advantages, such as low complexity and low latency threshold/iterative decoding, low density/sparsity, straightforward verification, implementable encoders/decoders, and prescribed free-distance \cite{mass, rob, lin}. Recent work adapts DTS ideas to convolutional/LDPC code settings, yielding structured, sparse parity-check matrices with predictable behavior and distance relations~\cite{alfarano2021,ShehadehKschischang2025}.

For QCCs, the classical orthogonality principle extends to a requirement for stabilizer \emph{symplectic orthogonality} (every $X$-type generator must commute with every $Z$-type generator). Interpreted on supports, any $X$/$Z$ pair must intersect in an even number of positions across all shifts. Strong-DTS supports are a natural combinatorial starting point because their structured differences limit overlaps between shifted sets, facilitating commutation; this assumption is consistent with the algebraic treatments in the QCC literature~\cite{ForneyGrasslGuha2007,WildeBrun2010}. A key challenge includes how to select shift-invariant polynomial stabilizer pairs $X(D)$ and $Z(D)$ that ensure symplectic orthogonality, avoid catastrophic failures, preserve sparsity with bounded memory, and admit provable distance guarantees.

In this paper, we extend the DTS methodology to a \emph{constructive} design of quantum convolutional stabilizers with prescribed minimum distance. Starting from a CSOC $X(D)$ built from \emph{strong} DTS supports, we introduce a simple index-reflection map on the DTS sums to produce a companion CSOC $Z(D)$. This reflection preserves the DTS difference multiset and hence sparsity, tap spans, and encoder memory. Under our DTS/CSOC hypotheses, $Z(D)$ also inherits the free distance of $X(D)$. Most importantly, the reflected supports enforce the required \emph{self-orthogonality} between $X$ and $Z$ \emph{by design}, thereby avoiding significant computer search. Therefore, the workflow is: (i) choose strong-DTS supports and form $X(D)$; (ii) apply the reflection map on the sum indices (optionally followed by a row permutation) to obtain $Z(D)$; and (iii) certify the construction via polynomial-domain checks. 
The reflection mechanism is lightweight and scalable, linking smoothly with standard stabilizer/QCC formalism.

Beyond avoiding exhaustive search, the proposed DTS-reflection construction yields sparse, shift-invariant stabilizers with controlled memory, enabling low-latency syndrome extraction from a finite sliding window. These properties directly benefit quantum communication and fault-tolerant architectures where bounded delay, limited multi-qubit interactions per time step, and scalable windowed decoding are critical. We also include comparative evaluation against representative QCC families to quantify gains in memory, distance guarantees,

\section{Background}\label{sec:preliminaries}

A binary convolutional code of block size $n$ and dimension $k$ can be specified by a full row-rank polynomial parity-check matrix
\[
  H(D)\;=\;\sum_{\ell=0}^{\mu} H_\ell D^\ell \in \mathbb{F}_2[D]^{(n-k)\times n},
\]
where $H_{\ell} \in \mathbb{F}_2^{(n-k)\times n}$ and $\mu$ is the \emph{memory}. The codewords $c(D)=\sum_{t} c_t D^t$ with $c_t\in\mathbb{F}_2^n$ satisfy the syndrome condition $H(D)c(D)^\top=0$, which can equivalently be written as the parity-check equations $\sum_{\ell=0}^{\mu} H_\ell\, c_{t-\ell}^\top=0$ for all $t\in\mathbb{Z}$. For a finite sequence $x=(x_0,\dots,x_J)$ with $x_j\in\mathbb{F}_2^n$, $j=0,1,\ldots,J$, let
$\mathrm{wt}(x)$ 
be the Hamming weight. The \emph{free distance} is\vspace{-2.5mm}
\[
  d_{\mathrm{free}}\;=\;\min\big\{\mathrm{wt}(c):\, c(D)\neq 0,\ H(D)c(D)^\top=0\big\}.
\]
We normalize code sequences by a time shift so that the earliest nonzero block occurs at time $t=0$, i.e., $\mathbf{c}_0\neq \mathbf{0}$.
Then the $j$-th \emph{column distance} is \vspace{-2.5mm}
\[
  d_c^{(j)}\;=\;\min\Big\{\sum_{t=0}^{j}\mathrm{wt}(c_t):
  \; c(D)\neq 0,\ H(D)c(D)^\top=0,\ \Big\}.
\]
and $\{d_c^{(j)}\}_{j\ge 0}$ is nondecreasing with $\lim_{j\to\infty} d_c^{(j)}=d_{\mathrm{free}}$.

\subsection{Difference Triangle Sets (DTS)}\label{subsec:dts}
We work with families of finite integer sets 
$T\subset \mathbb{Z}_{\ge 0}$ 
and define their \emph{positive difference set} as\vspace{-2mm} 
\[
  \Delta^+(T)\;=\;\{\,t'-t\;:\; t,t'\in T,\ t'>t\,\}\ \subset \ \mathbb{Z}_{\ge 1},
\]
and the \emph{scope} as $m(T)=\max T$.
For a family $\mathcal{T}=\{T_1,\dots,T_r\}$, the scope is $m(\mathcal{T})=\max_{1\le i\le r} m(T_i)$.

\begin{definition}[wDTS, DTS]\label{def:dts}
A family $\mathcal{T}=\{T_1,\dots,T_r\}$ of size-$w$ sets is a \emph{weak difference triangle set} (wDTS) if,
for every $i$, the positive differences within $T_i$ are pairwise distinct, i.e.,
$|\Delta^+(T_i)|=\binom{w}{2}$.
It is a (global) \emph{DTS} if, in addition, the sets $\Delta^+(T_i)$ are mutually disjoint across $i$
(no cross-family collisions of positive differences).
\end{definition}

\begin{definition}[Strong and full strong DTS]\label{def:strong-dts}
A wDTS $\mathcal{T}$ is called \emph{strong} if it is a DTS and, moreover,
$\bigcup_{i=1}^r \Delta^+(T_i)\subseteq \{1,2,\dots,M\}$ for some $M\in\mathbb{Z}_{\ge 1}$ fixed by the construction,
so that every admissible difference in $\{1,\dots,M\}$ appears \emph{at most once} across the family.
It is \emph{full strong} if the coverage is \emph{exact}, i.e.,
$\bigcup_{i=1}^r \Delta^+(T_i)=\{1,2,\dots,M\}$ (each positive difference $1\le d\le M$ appears exactly once).
We refer to $M$ as the \emph{difference budget}.
\end{definition}

\begin{remark}[Normalization and indexing]\label{rem:indexing}
Supports are written with explicit braces, e.g. $\{0,1,4\}$, and are normalized to start at $0$, i.e., they are $0$-based
($\min T_i=0$). Switching to $1$-based indexing shifts all elements by $+1$ and leaves
$\Delta^+(T_i)$ unchanged. In this paper, we keep $0$-based indexing to match delay exponents for $D$.
\end{remark}

\subsection{From DTS Supports to Convolutional Parity-checks}\label{subsec:dts-to-parity}

Let $n$ be the block length, $M$ be the memory,  and $R=(n-1)/n$ the rate. For each row $i$ of $H(D)$, choose a DTS
$T_i\subseteq\{1,\dots,M{+}1\}$, where element $t\in T_i$ corresponds to exponent of the delay operator $D^{t-1}$ in a parity-check polynomial $h_i(D)$.
For each column $j$ of $H(D)$, fix a column-support set $S_j\subseteq\{1,\dots,M{+}1\}$, and define
\[
e_t\in\mathbb F_2^{\,n},\qquad (e_t)_j=\mathbf I_{S_j}(t),\qquad t=1,\dots,M{+}1,
\]
where $(e_t)_j$ corresponds to the $j$-th entry of $e_t$ and $I_{S_j}(t)$ is an indicator function $I_{S_j}(t) = 1$ if $t\in S_j$, $0$ otherwise. The $i$-th parity-check row polynomial is
\[
h_i(D)=\sum_{t\in T_i} e_t D^{\,t-1},\quad
h_{i,j}(D)=\sum_{t\in T_i\cap S_j} D^{\,t-1}.
\]
Its degree is $\deg h_i(D)=m(T_i):=\max(T_i)-1$. Let $m(\mathcal T)=\max_i m(T_i)$, then the code memory is
\[
\mu=\Big\lceil \frac{m(\mathcal T)}{\,n-k\,}\Big\rceil-1,
\]
i.e., the smallest $\mu$ such that all delays in $[0,\,m(\mathcal{T})]$ fit into $(\mu+1)$ row blocks. The overall \emph{constraint length} is $\nu(H)=\sum_{i}\deg h_i(D)$. Following this construction while ensuring the DTS properties enforce noncollision constraints on $\{\Delta^+(T_i)\}$ that we later translate into guaranteed minimum distance and, in the quantum setting, commutation guarantees.


\subsection{Block Parity-check Matrices}\label{subsec:block-toeplitz}
We convert from the polynomial domain to the time domain in the usual way \cite{lin}, using $H(D)=[h_i(D)] = \sum_{\ell=0}^{\mu} H_\ell D^\ell$ with $H_\ell\in\mathbb{F}_2^{(n-k)\times n}$. Define the semi-infinite block parity-check matrix $\mathcal H_\infty$ acting on the stacked codeword sequence
$\mathbf c=\big[c_0^\top\; c_1^\top\; c_2^\top\;\cdots\big]^\top$ as
\[
  \mathcal H_\infty \;=\;
  \begin{bmatrix}
    H_0 & 0   & 0   & 0   & \cdots \\
    H_1 & H_0 & 0   & 0   & \cdots \\
    \vdots & \vdots & \ddots & \ddots & \ddots \\
    H_\mu & H_{\mu-1} & \cdots & H_1 & H_0 \\
    0 & H_\mu & \cdots & H_2 & H_1 \\
    \vdots & \vdots & \ddots & \ddots & \ddots
  \end{bmatrix}.
\]
The $t$-th block row enforces $s_t=\sum_{\ell=0}^{\mu} H_\ell c_{t-\ell}^\top=0$ (with $c_{u}\triangleq 0$ for $u<0$).
For a finite window $[0{:}j]$, the truncated matrix
$\mathcal H_{[0:j]}\in\mathbb{F}_2^{(j+1)(n-k)\times (j+1)n}$ is the $(j+1)\times(j+1)$
lower-banded block matrix whose $(t,u)$ block equals $H_{t-u}$ if $0\le t-u\le \mu$ and $0$ otherwise, i.e.,
\[
  \mathcal H_{[0:j]}\;=\;
  \begin{bmatrix}
    H_0 & 0   & \cdots & 0 \\
    H_1 & H_0 & \ddots & \vdots \\
    \vdots & \vdots & \ddots & 0 \\
    H_{\min(\mu,j)} & H_{\min(\mu-1,j-1)} & \cdots & H_0
  \end{bmatrix}.
\]
In particular,
\[
  d_c^{(j)} \;=\; \min\Big\{\sum_{t=0}^{j}\mathrm{wt}_{\mathrm{H}}(c_t)\ :\ \mathcal H_{[0:j]}\,\mathbf c_{[0:j]}^\top=0,\ c_0\neq 0\Big\}.
\]


\subsection{Symplectic Inner Product and Commutation}\label{subsec:symp}
For $X(D),Z(D)\in\mathbb{F}_2[D]^{r\times n}$,
the symplectic inner product over $\mathbb{F}_2[D,D^{-1}]$ is
\[
  \langle X,Z\rangle_{\mathrm{symp}}(D)\;=\;X(D)\,Z(D^{-1})^\top\;+\;Z(D)\,X(D^{-1})^\top.
\]
Writing $X(D)=\sum_{\ell=0}^{\mu_X} X_\ell D^\ell$ and $Z(D)=\sum_{\ell=0}^{\mu_Z} Z_\ell D^\ell$, with $X_\ell\in\mathbb{F}_2^{r_X\times n}$ and $Z_\ell\in\mathbb{F}_2^{r_Z\times n}$, the coefficient-wise expansion
\[
  X(D)Z(D^{-1})^\top + Z(D)X(D^{-1})^\top
  \;=\; \sum_{s\in\mathbb{Z}} C_s\, D^s,
\]

\[
  C_s \;=\; \sum_{\ell\in\mathbb{Z}} \big( X_\ell Z_{\ell+s}^\top \;+\; Z_\ell X_{\ell+s}^\top \big),
\]
(with $X_\ell\triangleq0$ or $Z_\ell\triangleq0$ outside their index ranges) shows that the stabilizer
commutation condition is equivalent to
\[
  C_s \;=\; 0\ \ \text{for all}\ s\in\mathbb{Z}.
\]
Equivalently, $X(D)Z(D^{-1})^\top+Z(D)X(D^{-1})^\top=0$ in $\mathbb{F}_2[D,D^{-1}]$.

We work over $\mathbb{F}_2$ with binary polynomial matrices in the delay operator $D$. A \emph{block} consists of $n$ physical qubits; a binary polynomial matrix $H(D)=H_0+H_1D+\cdots+H_\mu D^\mu$ with $(n-k)$ rows specifies a parity-check of (row) memory $\mu$, i.e., $H_\mu\ne 0$ and $H_{\mu+1}=0$.

\subsection{Construction of a CSOC from a Strong DTS}
\begin{definition}[Positive difference sets and CSOCs  \cite{lin}]
Let $x_i(D)=\sum_{a\in L_i} D^{a}$ be the $i$-th parity generator row of a systematic $(n,n-1,\mu)$
binary convolutional code, with strictly increasing support positions $L_i=\{l^{(i)}_1<\cdots<l^{(i)}_{J_i}\}\subseteq\{0,\dots,M\}$.
Its \emph{positive difference set} is
\[
\Delta_i \;=\; \{\, l^{(i)}_b - l^{(i)}_a \;:\; 1\le a<b\le J_i \,\}.
\]
We say the code is a \emph{classical self-orthogonal convolutional code (CSOC)} iff:
(i) each $\Delta_i$ is \emph{full} (all elements distinct), and
(ii) the $\{\Delta_i\}$ are \emph{mutually disjoint} ($\Delta_i\cap\Delta_j=\varnothing$ for $i\ne j$).
\end{definition}

\begin{remark}[CSOC from strong DTS: structure, memory, and distances]\label{rem:csoc_dts_allinone}
Let $T=\{T_1,\ldots,T_k\}$ be a \emph{strong} $(k,w)$--DTS over $\mathbb{F}_2$ with scope $m(T)$.
Then there exists a rate $R=k/(k{+}1)$ convolutional code $C$ over $\mathbb{F}_2$ whose generator matrix is obtained
from the supports in $T$ such that:

\begin{enumerate}
  \item 
  Each generator polynomial (row) of $C$ has Hamming weight $w$, and $C$ is self-orthogonal;

  \item 
  The memory of $C$ satisfies
  \[
    \mu \;=\; \Big\lceil \frac{m(T)}{\,n-k\,}\Big\rceil - 1,
  \]
  which for the present construction with $n-k=1$ (i.e., rate $R=k/(k+1)$) simplifies to $\mu = m(T)-1$;

  \item 
  The free distance obeys $d_{\mathrm{free}}(C)\ge w{+}1$ and, under the strong–DTS construction,
  \[
     d_{\mathrm{free}}(C)=w{+}1.
  \]

\end{enumerate}

\end{remark}

\begin{example}\label{ex1}
    Let $M=2$ and consider the sets $T_1=\{1,2\}$ and $T_2=\{1,3\}$. This pair constitutes a strong DTS (specifically a $(2,2)$ strong DTS). From these sets, we first construct a rate $R=2/3$ CSOC. 
    From (\ref{subsec:dts-to-parity}) we define $S_1=T_1, S_2=T_2$ and $S_3=\{1\}$ (in general for rate $n-1/n$ we have $n-1$ set of $S_i$`s and the last $S_n$ defined by an identity matrix.) 
    For each delay $t\in\{1,2,3\}$ we have
    \[
    e_t\in\mathbb F_2^{\,3},\qquad (e_t)_j=I_{S_j}(t),\ j=1,2,3,
    \]
    giving
    \[
    e_1=(1,1,1),\quad e_2=(1,0,0),\quad e_3=(0,1,0).
    \]
    With one parity-check row, the polynomial parity-check is
    \begin{align*}
    H(D) = \sum_{t=1}^{3} e_t D^{t-1} 
         &= (1,1,1) + (1,0,0)D + (0,1,0)D^2 \\
         &= \big(1{+}D,\ 1{+}D^2,\ 1\big).
    \end{align*}

    Equivalently, stacking the $e_t$ as rows yields the static pattern
    \[
  \begin{bmatrix}
    e_1\\ e_2\\ e_3
    \end{bmatrix}
    =\begin{bmatrix}
    1&1&1\\
    1&0&0\\
    0&1&0
    \end{bmatrix},
    \]
   
From this matrix, we derive the polynomial representation $H(D)$ as
$$ H(D) = (1+D, 1+D^2, 1). $$
\end{example}

\section{Constructing $Z(D)$ from\\ Strong DTS $X$-Supports}
\label{sec:construction}
In this section we introduce a reflection--permutation mapping on strong DTS index sets that, given the $X$-supports $\{T_i\}$, constructs companion $Z$-supports $\{U_i\}$ with identical difference spectra. We will later show in Section~\ref{subsec:sympl-orth} that the resulting pair possess symplectic orthogonality and therefore form a valid QCC.

Let $T_1,\ldots,T_k\subset\mathbb{N}_0$ be a strong DTS for the supports of the $k$ $X$-type generator rows, each normalized to contain~$0$. Denote the scope by $m_X=\max_i\{ \max T_i; 1\le i \le k\}$ and define
\begin{equation}
\mu_X=\Big\lceil \tfrac{m_X}{n-k}\Big\rceil-1.
\end{equation}

\subsection{Reflection--Permutation Mapping for $Z$ Supports}
Define the index reflection $R_{m_X}:\mathbb{Z}\to\mathbb{Z}$ by
\begin{equation}
\label{eq:reflection}
R_{m_X}(t)=m_X+2-t.
\end{equation}
Form the $Z$-support family by permuted reflection
\begin{equation}
U_i := R_{m_X}\bigl(T_{\pi(i)}\bigr),\qquad i=1,\ldots,k,
\end{equation}
where $\pi$ is any permutation that avoids trivial self-matches. Because $R_{m_X}$ preserves all difference magnitudes, $\{U_i\}$ is again a strong DTS with the same scope $m_Z=m_X$, hence
\begin{equation}
\mu_Z = \Big\lceil \tfrac{m_Z}{n-k}\Big\rceil-1 = \mu_X.
\end{equation}

\subsection{Preservation of Difference Properties and Combinatorial Equivalence}

\begin{definition}[Reflection and Memory]\label{gamma}
  Let $T=\{T_1,\dots,T_k\}$ be a collection of subsets of $\{1,\dots,M+1\}$ (each containing the element 1) that form a strong DTS.  Define the \emph{reflection map} $R_{M}$ by
  \[
    R_{M}(t)\;=\;M+2-t,
    \quad 1\le t\le M+1.
  \]
  For each column index $j$, set $U_j := R_{M}(T_{\pi(j)})$ for some permutation $\pi$ on $\{1,\dots,k\}$.  We say that $U=\{U_1,\dots,U_k\}$ is the \emph{reflected family} of $T$.  

  The \emph{memory} $\mu$ 
  of a convolutional code with generator matrix $G(D)$ is defined as
  \[
    \mu \;=\; \max_{i,j}\,\deg(g_{i,j}(D)),
  \]
  i.e.\ the maximal degree among all polynomial entries $g_{i,j}(D)$ of the generator matrix.  Equivalently, for a parity–check matrix $H(D)$ with a single row, $\mu$ is the highest exponent present in any polynomial entry of $H(D)$.  Thus for both $X$- and $Z$-components constructed from strong DTS families, we have $\mu_X=\mu_Z=M$.
\end{definition}


\begin{proposition}[Preservation of Strong DTS Properties and Code Parameters]\label{prop:preserve-dts}
  Let $T=\{T_1,\dots,T_k\}$ be a full strong DTS in $\{1,\dots,M+1\}$, and let $U$ be defined as in the definition above.  Then:
  \begin{enumerate}
    \item \emph{(Strong DTS preservation.)}  $U$ is a full strong DTS with the same positive-difference spectrum as $T$.  Hence $U$ inherits the full-coverage and disjointness properties;
    \item \emph{(Memory equality.)}  If $\mu_X=M$ is the memory of the $X$-component, then the reflected family satisfies $\mu_Z=M$ as well; thus $\mu_X=\mu_Z$.  Moreover, each generator column has the same weight $w=|T_i|=|U_i|$, so $d_x^{\perp}=w+1=d_z^{\perp}$;
    \item \emph{(Combinatorial equivalence.)}  For every property that depends only on the pattern of differences—such as avoiding small trapping sets or maximizing dual distance—the reflected family $U$ retains the same performance as $T$.  In particular, any structural advantages (e.g.\ large girth in the Tanner graph or high free distance) remain unchanged.
  \end{enumerate}
\end{proposition}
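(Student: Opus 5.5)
The plan is to reduce all three items to one elementary fact: the reflection $R_M(t)=M+2-t$ is an involutive affine bijection of $\{1,\dots,M+1\}$ onto itself that reverses order. For any $t<t'$ in $T_i$ we have $R_M(t')<R_M(t)$ and $R_M(t)-R_M(t')=t'-t$. Hence $R_M$ induces a bijection between unordered pairs of $T_{\pi(j)}$ and unordered pairs of $U_j$ that preserves the absolute difference. This gives
\[
\Delta^+(U_j)=\Delta^+(T_{\pi(j)}),\qquad |U_j|=|T_{\pi(j)}|=w .
\]

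For item 1, I apply this identity to each $j$. Since $\pi$ is a permutation of $\{1,\dots,k\}$, the family $\{\Delta^+(U_j)\}_j$ is just a reindexing of $\{\Delta^+(T_i)\}_i$. The required properties then transfer one by one:
\begin{itemize}
\item each $U_j$ has $\binom{w}{2}$ distinct differences, so $U$ is a wDTS;
\item the difference sets are pairwise disjoint, so $U$ is a DTS;
\item their union equals $\bigcup_i\Delta^+(T_i)=\{1,\dots,M\}$, which gives the strong and full strong properties with the same budget $M$.
\end{itemize}
One normalization point needs care. Definition~\ref{gamma} requires each set to contain $1$, but $R_M(1)=M+1$, so $U_j$ contains $1$ only if $M+1\in T_{\pi(j)}$. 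I would therefore either translate each $U_j$ by $1-\min U_j$, which leaves $\Delta^+$ unchanged by Remark~\ref{rem:indexing}, or note that the DTS definitions depend only on the sets $\Delta^+$ and so are unaffected by normalization. In either case the difference spectrum, viewed as a multiset, is identical to that of $T$.

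For item 2, $U_j\subseteq\{1,\dots,M+1\}$ because $R_M$ maps this interval onto itself. Fullness of $T$ means $M\in\Delta^+(T_i)$ for some $i$, and the only pair of $\{1,\dots,M+1\}$ at distance $M$ is $\{1,M+1\}$, so that $T_i$ contains both $1$ and $M+1$. Its reflection $U_{\pi^{-1}(i)}$ is then also $\{1,\dots\}\cup\{M+1\}$, which yields a delay exponent $M$. So the scope is $M+1$ in $1$-based terms, and by Definition~\ref{gamma} we get $\mu_Z=M=\mu_X$. The weight equality was shown above. I then apply Remark~\ref{rem:csoc_dts_allinone}(3) to the strong DTS $U$, which gives $d=w+1$ for both components.

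Item 3 is the main obstacle, because it is stated informally. I would make it precise as follows: any quantity determined by the indexed family of multisets $\{\Delta^+(\cdot)\}$, together with the set sizes, up to permutation of rows, is equal for $T$ and $U$. Examples are the orthogonality of check sums, the CSOC threshold-decoding bound, and the absence of length-4 cycles.

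For properties that go beyond differences, such as free distance, Tanner-graph girth, and trapping sets, I would give a stronger argument: the code built from $U$ is equivalent to the time-reversed code from $T$, with columns permuted by $\pi$. Concretely,
\[
h^{U}(D)=D^{M}\,h^{T}(D^{-1})
\]
up to a column permutation. Time reversal of the semi-infinite Toeplitz matrix $\mathcal H_\infty$ together with a coordinate permutation preserves codeword weights and Tanner-graph isomorphism type. The delicate step is column distance, which is not reversal invariant. I would therefore restrict the claim to $d_{\mathrm{free}}$ and graph-theoretic invariants.
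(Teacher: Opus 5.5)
\textbf{Items 1 and 2: same as the paper.} Here you follow the paper. $R_M$ is a difference-preserving bijection of the window, so $\Delta^+(U_j)=\Delta^+(T_{\pi(j)})$ and every DTS property transfers. The inclusion $U_j\subseteq\{1,\dots,M+1\}$ then pins the memory at $M$. The paper obtains the top exponent more directly: every $T_i$ contains $1$, so every $U_i$ contains $R_M(1)=M+1$. Your fullness argument reaches the same conclusion without needing that normalization. You also correctly flag that $U_j$ need not contain $1$, which the paper ignores.

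\textbf{Item 3: a genuinely different route.} The paper only asserts that whatever depends on the difference spectrum and the weights carries over. That assertion does not by itself justify the free-distance, girth, or trapping-set claims; your code-equivalence argument does. Your caveat about column distance is also real. In Example 1, $X(D)=(1+D,1+D^2,1)$ has $H_0=(1,1,1)$, so $d_c^{(0)}=2$. By contrast, $Z(D)=(1+D^2,D+D^2,1)$ has $H_0=(1,0,1)$, and the weight-one block $c_0=(0,1,0)$ begins a codeword (take the second information sequence equal to $1$), so $d_c^{(0)}=1$. Your restricted version of item 3 is therefore more accurate than the paper's blanket claim.

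\textbf{One step needs fixing.} The identity $h^{U}(D)=D^{M}h^{T}(D^{-1})$ ``up to a column permutation'' fails in the parity coordinate. The construction keeps that entry equal to $1$ (the paper stresses that the map acts on $P(D)$ only), whereas reversal places $D^{M}$ there. The correct statement is that the $U$-code is the time-reversed $T$-code with two further changes: the information columns are permuted by $\pi$, and the parity coordinate sequence $p(D)$ is replaced by $D^{M}p(D)$.

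Multiplying a single coordinate sequence by a monomial is a weight-preserving bijection on finite-weight codewords, and it merely relabels the parity variable nodes of the bi-infinite Tanner graph. So your conclusions for $d_{\mathrm{free}}$ and the graph invariants still hold. Phrase the argument on finite codewords, via $c(D)\mapsto D^{N}c(D^{-1})$ with $N=\deg c$, and on the bi-infinite graph. Do not argue by ``reversing'' the semi-infinite $\mathcal H_\infty$, which has no reversed counterpart of the same shape.
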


\begin{proof}
\begin{itemize}
    \item[1.] The reflection mapping preserves each individual difference and does not introduce new differences.
    \item[2.] Because the largest index in $U_i$ is $M+1$, the maximum exponent of any polynomial entry in the $Z$ parity–check matrix remains $M$, giving $\mu_Z=M$ and establishing the result.
    \item[3.] Since each $U_i$ has the same size as $T_{\pi(i)}$ and the same difference spectrum, properties based solely on difference patterns or column weights carry over immediately.
\end{itemize}
\end{proof}

\begin{examplecont}[Example~\ref{ex1} continued]
Now, we apply the transformation from Definition~\ref{gamma}, where $R_M(t)=4 - t$. This operation yields the transformed sets $U_1=\{2,3\}$ and $U_2=\{1,3\}$, which also form a strong DTS. Using these new sets, we construct a second rate $R=2/3$ CSOC. The corresponding parity-check matrix is

\begin{equation*}
H = \begin{bmatrix}

0 & 1 & 1 \\

1 & 0 & 0 \\

1& 1 & 0

\end{bmatrix}.
\end{equation*}

The polynomial representation for this code is
\begin{equation}\label{zd1}
    H(D) = (D+D^2, 1+D^2, 1).
\end{equation}

\end{examplecont}

\section{Symplectic Orthogonality}\label{subsec:sympl-orth}

In this section, we formalize the symplectic commutation requirement for the $X$- and $Z$-type check matrices of a QCC and verify it for the reflected–permutation construction used in \eqref{eq:reflection}. Throughout, all polynomials and matrices are taken over $\mathbb{F}_2$, and delays are assumed finite unless stated otherwise. 



\begin{definition}[Reflected--permutation construction for $Z$-rows]\label{def:refperm}
Fix an integer $M\ge 1$ and adopt $1$-based indexing for time positions $\{1,2,\dots,M{+}1\}$. For each $X$-row $x_i(D)$, let $T_i\subseteq\{1,\dots,M{+}1\}$ denote its support within a length-$(M{+}1)$ window. Let $\pi$ be a permutation of the row indices $\{1,\dots,r\}$. Define the reflection map
\[
R_M(t)\;:=\;M{+}2-t,\qquad t\in\{1,\dots,M{+}1\}.
\]
The $Z$-row supports are then specified by
\[
U_i\;:=\;R_M\bigl(T_{\pi(i)}\bigr)\;=\;\{\,R_M(t)\,:\,t\in T_{\pi(i)}\,\},
\]
so that the rows $z_i(D)$ of $Z(D)$ are the indicator polynomials of the sets $U_i$.
\end{definition}

\begin{remark}[Index $\leftrightarrow$ degree alignment]\label{rem5}
Within a length-$(M{+}1)$ window, the position $t\in\{1,\dots,M{+}1\}$ corresponds to exponent
$a=t-1$. The reflection $R_M(t)=M{+}2-t$ therefore matches polynomial reversal $a\mapsto M-a$,
i.e., $D^{M}(\cdot)(D^{-1})$ on coefficients.
\end{remark}

\begin{theorem}[CSOC-preserving reflection and symplectic orthogonality]\label{thm:symp-reflect-pds}
Let $X(D)\in\mathbb{F}_2[D,D^{-1}]^{\,(n-1)\times n}$ be the parity-check matrix of a systematic
$(n,n-1,M)$ binary convolutional code whose positive difference sets $\{\Delta_i\}_{i=1}^{n-1}$
are full and mutually disjoint (so $X(D)$ is CSOC in the sense above).
Construct $Z(D)$ from $X(D)$ by reflecting supports within the same window and permuting rows, i.e.,
\[
U_i := R_M\!\big(T_{\pi(i)}\big),\qquad R_M(t)=M{+}2-t,
\]
equivalently, row-wise permutations on polynomials
\[
z_j(D)\;=\;D^{M}\,x_{\pi(j)}(D^{-1}),\qquad (j=1,\dots,n-1).
\]
Then:
\begin{enumerate}
\item[\textnormal{(i)}] $Z(D)$ is CSOC in the same sense (its positive difference sets are full and pairwise disjoint);
\item[\textnormal{(ii)}] $X(D)$ and $Z(D)$ satisfy the polynomial stabilizer (symplectic) condition
\[
X(D)\,Z(D^{-1})^{\top}\;+\;Z(D)\,X(D^{-1})^{\top}\;=\;0.
\]
\end{enumerate}
\end{theorem}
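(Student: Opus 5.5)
The plan is to handle the two parts separately: (i) is purely combinatorial, and (ii) is an algebraic identity in $\mathbb{F}_2[D,D^{-1}]$ that I would reduce to a parity count of coincidences among shifted supports.

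For (i), I would first check that $R_M$ is a bijection of $\{1,\dots,M{+}1\}$ onto itself with $R_M(t')-R_M(t)=t-t'$. Then $\Delta^+(U_i)=\Delta^+(T_{\pi(i)})$: ordered pairs $t<t'$ in $T_{\pi(i)}$ become pairs $R_M(t')<R_M(t)$ in $U_i$ with the same positive difference. Fullness of each $\Delta^+(U_i)$ follows directly. Because $\pi$ is a bijection on row indices, the family $\{\Delta^+(U_i)\}$ is a relabelling of $\{\Delta_i\}$, so pairwise disjointness is inherited. This is exactly item~1 of Proposition~\ref{prop:preserve-dts}, and Remark~\ref{rem5} translates it into the polynomial form $z_j(D)=D^{M}x_{\pi(j)}(D^{-1})$.

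For (ii), I would write $P$ for the permutation matrix of $\pi$, so that $Z(D)=D^{M}P\,X(D^{-1})$ and $Z(D^{-1})=D^{-M}P\,X(D)$. Substituting gives
\[
X(D)Z(D^{-1})^{\top}+Z(D)X(D^{-1})^{\top}=D^{-M}A(D)P^{\top}+D^{M}P\,A(D^{-1}),
\]
with $A(D)=X(D)X(D)^{\top}$. Over $\mathbb{F}_2$, the condition is that the $(i,j)$ entry $D^{-M}a_{i,\pi(j)}(D)+D^{M}a_{\pi(i),j}(D^{-1})$ vanishes, i.e., that the two Laurent polynomials coincide. The coefficient of $D^{s}$ in $a_{i,k}(D)=\sum_{\ell}x_{i\ell}(D)x_{k\ell}(D)$ counts, mod $2$, pairs of support positions in a common column whose exponents sum to $s$. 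I would express this using the column sets $S_\ell$ and the row supports $T_i$ from Section~\ref{subsec:dts-to-parity}. The strong/full DTS property (each difference used at most once across the family) should then force all such collisions to occur in matched pairs, giving even counts, except for the "diagonal" terms coming from the all-ones vector $e_1$ and the identity column $S_n$.

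The main obstacle is this last step. In general $D^{-M}A(D)P^{\top}$ and $D^{M}PA(D^{-1})$ are not obviously equal, and the cancellation must come from the structure of the construction rather than from formal manipulation. My first attempt would be to take $\pi$ an involution. Since $A$ is symmetric, the $(i,j)$ entry of the second term then becomes the reversal $D^{M}a_{j,\pi(i)}(D^{-1})$ of the corresponding first-term entry. The claim would thus reduce to a palindromic-type symmetry $a_{i,\pi(j)}(D)=D^{2M}a_{\pi(j),i}(D^{-1})$, which I would try to verify from the reflection symmetry of the column sets. If that fails, I would fall back to checking the coefficient condition $C_s=0$ from Section~\ref{subsec:symp} directly, using the difference-set disjointness to pair up terms. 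I would confirm the bookkeeping on Example~\ref{ex1} with \eqref{zd1} before generalizing, being alert that an extra hypothesis on $\pi$ may be needed.
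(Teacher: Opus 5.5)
Part (i) is fine and is the paper's argument. Part (ii) has a genuine gap. You name the obstacle but do not resolve it, and neither fallback route works.

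\textbf{The palindromic route.} The target you reduce to is $a_{i,\pi(j)}(D)=D^{2M}a_{\pi(j),i}(D^{-1})$. This comes from an index slip: your own expression for the second-term entry is $D^{M}a_{j,\pi(i)}(D^{-1})$, not $D^{M}a_{\pi(j),i}(D^{-1})$. Once $A=A^{\top}$ is used, your target no longer involves $\pi$. It simply says that every entry of $X(D)X(D)^{\top}$ is fixed by $D^{2M}(\cdot)(D^{-1})$. This is exactly the identity \eqref{A7} that the paper's proof asserts via the map $(t,u)\mapsto(M-u,M-t)$. That map sends $L_{a,k}\times L_{b,k}$ into $L_{b,k}\times L_{a,k}$ only if the supports are invariant under $t\mapsto M-t$, and the identity is false in general. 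For the supports $\{0,1\},\{0,3\}$ with $M=3$ (Table~\ref{tab1}, No.~2), one gets $(1+D)^2+(1+D^3)^2+1=1+D^2+D^6$, whose $D^2$ and $D^4$ coefficients differ.

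\textbf{The pairing route.} Pairing terms via difference-set disjointness cannot help either. The coefficients count pairs with a prescribed \emph{sum} $t+u$, while the DTS hypotheses constrain only differences.

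\textbf{The statement itself needs a hypothesis on $\pi$.} Your closing suspicion is right: (ii) is false without one, so no argument can close it as stated. Consider the setting the paper actually instantiates in Example~\ref{ex1} and the tables:
\begin{itemize}
\item one check row $X(D)=(x_1,\dots,x_{n-1},1)$;
\item the permutation acts on its entries;
\item the identity entry is not reflected.
\end{itemize}
So $XX^{\top}$ with a row permutation is not the relevant object. Here the symplectic form equals $g(D)+g(D^{-1})$ with $g(D)=D^{-M}\sum_j x_j(D)x_{\pi(j)}(D)$; the identity entries contribute $1+1=0$. Take the supports $\{0,1\},\{0,2\},\{0,5\}$ with $M=5$ (Table~\ref{tab2}, No.~1). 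With $\pi=\mathrm{id}$ the form is $D^{-3}+D^{-1}+D+D^{3}$. With the fixed-point-free $3$-cycle $(1\,2\,3)$ it is $D^{-5}+D^{-1}+D+D^{5}$. Both are nonzero.

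\textbf{The missing idea.} What is needed is cancellation in characteristic $2$ under an involution. If $\pi^2=\mathrm{id}$, the terms $x_jx_{\pi(j)}$ and $x_{\pi(j)}x_j$ cancel, so
$g(D)=D^{-M}\bigl(\sum_{\pi(j)=j}x_j(D)\bigr)^2$.
Since squaring is injective on $\mathbb{F}_2[D]$, $g(D)=g(D^{-1})$ holds if and only if $\sum_{\pi(j)=j}x_j$ is fixed by $x\mapsto D^{M}x(D^{-1})$. This holds for the fixed-point-free $\pi$ of Tables~\ref{tab1} and~\ref{tab3}, and for the fixed entry $1+D^{M}$ in Table~\ref{tab2}. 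A hypothesis of this kind must be added to the statement, and the DTS property plays no role in (ii).
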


The proof is contained in Appendix I.
\begin{examplecont}[Example~\ref{ex1} continued]

Using Remark~\ref{rem5}, the reflection map is equivalent to the transformation 
$D^{M}(\cdot)(D^{-1})$. Moreover, we can express $Z(D)$ through this mapping as well. 
In particular, we have
\[
X(D) = (1+D,\,1+D^2,\,1),
\]
and thus, by Remark~\ref{rem5},
\[
\begin{aligned}
Z(D) &= \big(D^2(1+D^{-2}),\,D^2(1+D^{-1}),\,1\big) \\
     &= (1+D^2,\,D+D^2,\,1),
\end{aligned}
\]

It is important to emphasize that this map is applied only to the generator 
polynomials, not to the identity matrix. For instance, for $X(D)$ we write 
$[P(D)\,|\,I]$, and the transformation is applied to $P(D)$. Furthermore, 
when using this formula, one must also consider for the permutation $\pi(i)$. If one 
wishes to avoid explicit consideration of $\pi(i)$, the direct computation can be 
performed using~\eqref{A1}, as\vspace{-1mm} 

\[
D^M X(D^{-1}) = (D+D^2,\,1+D^2,\,1).
\]
By choosing the row permutation matrix
\[
\Pi = 
\begin{bmatrix}
0 & 1 & 0 \\[6pt]
1 & 0 & 0 \\[6pt]
0 & 0 & 1
\end{bmatrix},
\]
we finally obtain
\[
Z(D) = (1+D^2,\,D+D^2,\,1).
\]
The code $Z(D)$ is constructed from $X(D)$ is exactly same with Theorem~\ref{thm:symp-reflect-pds}, thereby satisfying the self-orthogonality condition of quantum convolutional codes. Consequently, this procedure yields a valid Quantum Convolutional Code (QCC) that fulfills the necessary self-orthogonality property.

\end{examplecont}





\section{Numerical Results}

We use the reflection construction on several strong DTS families and report the resulting
$(X(D),Z(D))$ stabilizer pairs for rates $1/3$, $2/4$, and $3/5$. Across all cases, the reflected
$Z$–supports preserve scope, sparsity, and memory, and the polynomial-domain check confirms
self-orthogonality. Tables~\ref{tab1}–\ref{tab3} list the DTS support sets $T$ and the corresponding
generator polynomials (given as exponent supports) for $X(D)$ and $Z(D)$. Each set of exponents in the table, such as $P(D) = \big\{ \{0, 1\}, \{0, 2\} \big\}$, is converted into its corresponding polynomial, $X(D) = [P(D) | I] = [1+D \hspace{0.3cm} 1 + D^2 \hspace{0.3cm} 1 ]$.

For each entry, we start from a strong DTS family $\{\ T_i\}$, form the $X$–stabilizers, construct the companion $Z$–family by the reflection–permutation map
$R_m(t)=m{+}2{-}t$ applied to the $X$–supports (optionally followed by a row permutation).
The memory is determined by the scope $m$ (shown in the tables), and the per-row Hamming
weight is $w$ (giving $d_{\mathrm{free}}=w{+}1$). We then verify in the polynomial domain that the constructed pairs satisfy
the symplectic commutation condition (self-orthogonality).


Tables report one-based DTS supports \(T\) and zero-based exponent sets \((g_1,\ldots)\). In the DTS-set columns, the first row is \(T\) and the second row is \(U\). A simple script converting \(T\!\to\! g\), forming \(Z\) via \(R_m\), and verifying commutation in the polynomial domain reproduces Tables~\ref{tab1}--\ref{tab3}.

\begin{table}[H] 
  \centering
  \footnotesize
  \setlength{\tabcolsep}{2pt}
  \renewcommand{\arraystretch}{1.2}

  \begin{tabularx}{\columnwidth}
  {|c|c|c|c|>{\raggedright\arraybackslash}X|c|c|}
    \hline
    \textbf{No.} & \textbf{Poly.} & \textbf{$m$} & \textbf{$w$} &
    \textbf{DTS sets} & \textbf{$g_1$} & \textbf{$g_2$} \\
    \hline

    \multirow{2}{*}{1}
      & $X(D)$ & \multirow{2}{*}{2} & \multirow{2}{*}{2} &
        \{1, 2\}; \{1, 3\} &  \{0, 1\}  & \{0, 2\}  \\ \cline{2-2} \cline{5-7}
      & $Z(D)$ &  &  & \{1, 3\}; \{2, 3\} & \{0, 2\} & \{1, 2\} \\ \hline

    \multirow{2}{*}{2}
      & $X(D)$ & \multirow{2}{*}{3} & \multirow{2}{*}{2} &
        \{1, 2\}; \{1, 4\} & \{0, 1\} & \{0, 3\}  \\ \cline{2-2} \cline{5-7}
      & $Z(D)$ &  &  & \{1, 4\}; \{3, 4\} & \{0, 3\} & \{2, 3\} \\ \hline

    \multirow{2}{*}{3}
      & $X(D)$ & \multirow{2}{*}{9} & \multirow{2}{*}{3} &
        \{1, 2, 4\}; \{1, 5, 10\} & \{0, 1, 3\} & \{0, 4, 9\}  \\ \cline{2-2} \cline{5-7}
      & $Z(D)$ &  &  & \{1, 6, 10\}; \{7, 9, 10\} & \{0, 5, 9\} & \{6, 8, 9\} \\ \hline

    \multirow{2}{*}{4}
      & $X(D)$ & \multirow{2}{*}{10} & \multirow{2}{*}{3} &
        \{1, 2, 4\}; \{1, 5, 11\} & \{0, 1, 3\} & \{0, 4, 10\}  \\ \cline{2-2} \cline{5-7}
      & $Z(D)$ &  &  & \{1, 7, 11\}; \{8, 10, 11\} & \{0, 6, 10\} & \{7, 9, 10\} \\ \hline

    \multirow{2}{*}{5}
      & $X(D)$ & \multirow{2}{*}{22} & \multirow{2}{*}{4} &
        \{1, 2, 4, 8\}; \{1, 6, 14, 23\} & \{0, 1, 3, 7\} & \{0, 5, 13, 22\}  \\ \cline{2-2} \cline{5-7}
      & $Z(D)$ &  &  & \{1, 10, 18, 23\}; \{16, 20, 22, 23\} & \{0, 9, 17, 22\} & \{15, 19, 21, 22\} \\ \hline
  \end{tabularx}

  \caption{Rate 1/3 self-orthogonal quantum convolutional codes}
  \label{tab1}
\end{table}

\begin{table}[H] 
  \centering
  \footnotesize
  \setlength{\tabcolsep}{2pt}
  \renewcommand{\arraystretch}{1.2}

  \begin{tabularx}{\columnwidth}
  {|c|c|c|c|>{\raggedright\arraybackslash}X|c|c|c|}
    \hline
    \textbf{No.} & \textbf{Poly.} & \textbf{$m$} & \textbf{$w$} &
    \textbf{DTS sets} & \textbf{$g_1$} & \textbf{$g_2$} & \textbf{$g_3$} \\
    \hline

    \multirow{2}{*}{1}
      & $X(D)$ & \multirow{2}{*}{5} & \multirow{2}{*}{2} &
        \{1, 2\}; \{1, 3\}; \{1, 6\} & \{0, 1\} & \{0, 2\} & \{0, 5\} \\ \cline{2-2} \cline{5-8}
      & $Z(D)$ &  &  & \{4, 6\}; \{5, 6\}; \{1, 6\} & \{3, 5\} & \{4, 5\} & \{0, 5\}\\ \hline
    \multirow{2}{*}{2}
      & $X(D)$ & \multirow{2}{*}{6} & \multirow{2}{*}{2} &
        \{1, 2\}; \{1, 3\}; \{1, 7\} & \{0, 1\} & \{0, 2\} & \{0, 6\} \\ \cline{2-2} \cline{5-8}
      & $Z(D)$ &  &  & \{5, 7\}; \{6, 7\}; \{1, 7\} & \{4, 6\} & \{5, 6\} & \{0, 6\}\\ \hline
    \multirow{2}{*}{3}
      & $X(D)$ & \multirow{2}{*}{7} & \multirow{2}{*}{2} &
        \{1, 2\}; \{1, 3\}; \{1, 8\} & \{0, 1\} & \{0, 2\} & \{0, 7\} \\ \cline{2-2} \cline{5-8}
      & $Z(D)$ &  &  & \{6, 8\}; \{7, 8\}; \{1, 8\} & \{5, 7\} & \{6, 7\} & \{0, 7\}\\ \hline
    \multirow{2}{*}{4}
      & $X(D)$ & \multirow{2}{*}{8} & \multirow{2}{*}{2} &
        \{1, 2\}; \{1, 3\}; \{1, 9\} & \{0, 1\} & \{0, 2\} & \{0, 8\} \\ \cline{2-2} \cline{5-8}
      & $Z(D)$ &  &  & \{7, 9\}; \{8, 9\}; \{1, 9\} & \{6, 8\} & \{7, 8\} & \{0, 8\}\\ \hline
    \multirow{2}{*}{5}
      & $X(D)$ & \multirow{2}{*}{9} & \multirow{2}{*}{2} &
        \{1, 2\}; \{1, 3\}; \{1, 10\} & \{0, 1\} & \{0, 2\} & \{0, 9\}\\ \cline{2-2} \cline{5-8}
      & $Z(D)$ &  &  & \{8, 10\}; \{9, 10\}; \{1, 10\} & \{7, 9\} & \{8, 9\} & \{0, 9\}\\ \hline
  \end{tabularx}

  \caption{Rate 2/4 self-orthogonal quantum convolutional codes}
  \label{tab2}\vspace{-5mm}
\end{table}

\begin{table*}[!t]
  \centering
  \footnotesize
  \setlength{\tabcolsep}{2pt}
  \renewcommand{\arraystretch}{1.2}
  \begin{tabularx}{\textwidth}{|c|c|c|c|>{\raggedright\arraybackslash}X|c|c|c|c|}
    \hline
    \textbf{No.} & \textbf{Poly.} & \textbf{$m$} & \textbf{$w$} &
    \textbf{DTS sets}& \textbf{$g_1$} & \textbf{$g_2$} & \textbf{$g_3$} & \textbf{$g_4$}\\ 
    \hline

    \multirow{2}{*}{1}
      & $X(D)$ & \multirow{2}{*}{18} & \multirow{2}{*}{3} &
        \{1, 2, 4\}; \{1, 5, 10\}; \{1, 7, 14\}; \{1, 9, 19\} & \{0, 1, 3\} & \{0, 4, 9\} & \{0, 6, 13\} & \{0, 8, 18\}\\ \cline{2-2} \cline{5-9}
      & $Z(D)$ &  &  & \{10, 15, 19\}; \{16, 18, 19\}; \{1, 11, 19\}; \{6, 13, 19\} & \{9, 14, 18\} & \{15, 17, 18\} & \{0, 10, 18\} & \{5, 12, 18\}\\ \hline

    \multirow{2}{*}{2} 
      & $X(D)$ & \multirow{2}{*}{19} & \multirow{2}{*}{3} &
         \{1, 2, 4\}; \{1, 5, 10\}; \{1, 7, 14\}; \{1, 9, 20\} & \{0, 1, 3\} & \{0, 4, 9\} & \{0, 6, 13\} & \{0, 8, 19\}\\ \cline{2-2} \cline{5-9}
      & $Z(D)$ &  &  & \{11, 16, 20\}; \{17, 19, 20\}; \{1, 12, 20\}; \{7, 14, 20\} & \{10, 15, 19\} & \{16, 18, 19\} & \{0, 11, 19\} & \{6, 13, 19\}\\ \hline

    \multirow{2}{*}{3} 
      & $X(D)$ & \multirow{2}{*}{39} & \multirow{2}{*}{4} &
        \{1, 2, 4, 8\}; \{1, 6, 14, 24\}; \{1, 10, 25, 39\}; \{1, 12, 28, 40\} & \{0, 1, 3, 7\} & \{0, 5, 13, 23\} & \{0, 9, 24, 38\} & \{0, 11, 27, 39\}\\ \cline{2-2} \cline{5-9}
      & $Z(D)$ &  &  & \{17, 27, 35, 40\}; \{33, 37, 39, 40\}; \{1, 13, 29, 40\}; \{2, 16, 31, 40\} & \{16, 26, 34, 39\} & \{32, 36, 38, 39\} & \{0, 12, 28, 39\} & \{1, 15, 30, 39\}\\ \hline

    \multirow{2}{*}{4} 
      & $X(D)$ & \multirow{2}{*}{39} & \multirow{2}{*}{3} &
        \{1, 2, 4, 8\}; \{1, 6, 14, 24\}; \{1, 10, 25, 39\}; \{1, 13, 29, 40\} & \{0, 1, 3, 7\} & \{0, 5, 13, 23\} & \{0, 9, 24, 38\} & \{0, 12, 28, 39\}\\ \cline{2-2} \cline{5-9}
      & $Z(D)$ &  &  & \{17, 27, 35, 40\}; \{33, 37, 39, 40\}; \{1, 12, 28, 40\}; \{2, 16, 31, 40\} & \{16, 26, 34, 39\} & \{32, 36, 38, 39\} & \{0, 11, 27, 39\} & \{1, 15, 30, 39\}\\ \hline

  \end{tabularx}

  \caption{Rate 3/5 self-orthogonal quantum convolutional codes}
  \label{tab3} 
\end{table*}

\section{Conclusion}
\label{sec:conclusion}

We presented a search-free DTS-reflection framework for constructing sparse QCC stabilizers that are symplectically orthogonal by design while preserving low memory and fixed check weight. The resulting structure supports practical sliding-window decoding with per-iteration complexity scaling linearly in the window size and the number of graph edges, and it is compatible with streaming syndrome extraction schedules on hardware due to bounded delay. Comparative results against standard QCC constructions highlight favorable trade-offs in memory and distance guarantees with negligible construction time, and additional larger DTS instances further confirm sparsity preservation and controllable growth at scale.


\section*{Appendix I: Proof of Theorem 1}

\begin{proof}
\emph{(i) CSOC is preserved by reflection.}
Let $L_i=\{l^{(i)}_1<\cdots<l^{(i)}_{J_i}\}\subseteq\{0,\dots,M\}$ be the support of $x_i(D)$ and define
the reflected set $L'_i := \{\, M-l \;:\; l\in L_i \,\}$ (equivalently, $t\mapsto R_M(t)$ in $1$-based indexing).
Sort $L'_i$ increasingly to obtain the support of $z_{\pi^{-1}(i)}(D)$.
For any pair $l^{(i)}_a<l^{(i)}_b$ we have
\[
\big| (M-l^{(i)}_a) - (M-l^{(i)}_b) \big| \;=\; \big| l^{(i)}_b - l^{(i)}_a \big|.
\]
Hence the set of \emph{positive} differences computed from $L'_i$ coincides with the set of positive differences from $L_i$:
$\Delta'_i=\Delta_i$ as sets. Therefore, if each $\Delta_i$ is full, so is each $\Delta'_i$; if the family
$\{\Delta_i\}$ is mutually disjoint, then so is $\{\Delta'_i\}$ (permutation does not affect disjointness).
Thus each reflected row remains CSOC and the family remains CSOC, proving $Z(D)$ is CSOC.

\emph{(ii) Symplectic orthogonality (matrix-only).}
From part (i) we already know $X(D)$ and $Z(D)$ are CSOC and that
\begin{equation}\label{A1}
   Z(D)=\Pi\,D^{M}X(D^{-1})\qquad (\Pi\text{is a row permutation}). 
\end{equation}
Write the $(i,j)$th rows of $X(D)$ and $Z(D)$ as $x_i(D)$ and $z_j(D)$, respectively.
By definition of the polynomial symplectic form, the $(i,j)$ entry of
\[
S(D)\;:=\;X(D)Z(D^{-1})^{\top}+Z(D)X(D^{-1})^{\top}
\]
\begin{equation}\label{A2}
  S_{ij}(D)
:= x_i(D)\,z_j(D^{-1})^{\top}+z_j(D)\,x_i(D^{-1})^{\top}.
\end{equation}

From $z_j(D)=D^{M}x_{\pi(j)}(D^{-1})$ we obtain
\begin{equation}\label{A3}
    z_j(D^{-1})^{\top}=x_{\pi(j)}(D)^{\top}D^{-M},\quad
z_j(D)=D^{M}x_{\pi(j)}(D^{-1}).
\end{equation}
Substituting these into \eqref{A2} gives
\begin{multline}
    \label{A4}
    S_{ij}(D)
= x_i(D)\,x_{\pi(j)}(D)^{\top}D^{-M}
  \;+\\\; D^{M}x_{\pi(j)}(D^{-1})\,x_i(D^{-1})^{\top}.
\end{multline}
For each integer $s$, define the \emph{sum–index matrix} $C_s(X)\in\mathbb{F}_2^{r\times r}$ by \vspace{-0.3cm}
\begin{multline}
\big(C_s(X)\big)_{ab}
:=\\\sum_{k=1}^n \#\Big\{(t,u)\in L_{a,k}\times L_{b,k}\;:\; t+u=s\Big\}\ \bmod 2,
\end{multline}
where $L_{a,k}\subseteq\{0,\ldots,M\}$ is the tap–support of the $(a,k)$ entry of $X$.
Then the two terms in \ref{A4} contribute exactly the \emph{same} index \(b=\pi(j)\) with
two sum–indices:
\begin{equation}\label{A5}
\begin{aligned}
\big[D^{\tau}\big]\,x_i(D)\,x_{\pi(j)}(D)^{\top}D^{-M}
&=\big(C_{M+\tau}(X)\big)_{i,\pi(j)},\\
\big[D^{\tau}\big]\,D^{M}x_{\pi(j)}(D^{-1})\,x_i(D^{-1})^{\top}
&=\big(C_{M-\tau}(X)\big)_{\pi(j),i}.
\end{aligned}
\end{equation}
Therefore the $(i,j)$ entry of the coefficient matrix of $D^{\tau}$ in $S(D)$ is \vspace{-0.2cm}
\begin{equation}\label{A6}
    \big[D^{\tau}\big]\,S_{ij}(D)
=\big(C_{M+\tau}(X)\big)_{i,\pi(j)}\;+\;\big(C_{M-\tau}(X)\big)_{\pi(j),i}.
\end{equation}

Because $X$ is CSOC (full $+$ mutually disjoint positive-difference sets), the following
\emph{columnwise involution} is well defined and bijective: for any fixed column $k$,
\[
(t,u)\longmapsto (t',u'):=\big(M-u,\ M-t\big),
\]
which takes the constraint \(t+u=s\) to \(t'+u'=2M-s\).
Disjointness ensures this mapping creates no aliasing between distinct witnesses
(in particular, there are no fixed points at the same tap in the same column).
For all $a,b$ and all $s$,
\begin{equation}\label{A7}
  \big(C_s(X)\big)_{ab} \;=\; \big(C_{2M-s}(X)\big)_{ba}.  
\end{equation}

Apply \ref{A7} to the two addends in \ref{A6} with \vspace{-0.2cm}
\[
s=M+\tau\quad\Rightarrow\quad 2M-s=M-\tau.
\]
Then \vspace{-0.2cm}
\[
\big(C_{M+\tau}(X)\big)_{i,\pi(j)}=\big(C_{M-\tau}(X)\big)_{\pi(j),i}.
\]
Thus the two addends in \ref{A6} are \emph{equal}, and since we are in $\mathbb{F}_2$,
\[
\big[D^{\tau}\big]\,S_{ij}(D)
=\big(C_{M+\tau}(X)\big)_{i,\pi(j)}+\big(C_{M+\tau}(X)\big)_{i,\pi(j)}
=0.
\]
Because this holds for every $\tau$ and every $(i,j)$, all coefficients of $S(D)$ vanish and
\[
X(D)Z(D^{-1})^{\top}+Z(D)X(D^{-1})^{\top}\;=\;0.\qedhere
\]
\end{proof}

\end{document}